\newtheorem{theorem}{Theorem}
\newtheorem{lemma}[theorem]{Lemma}
\newcommand{\Figure}{Fig.\xspace}
\definecolor{gray}{rgb}{0.5,0.5,0.5}
\newcommand{\hide}[1]{}
\newcommand{\ignore}[1]{}
\newcommand{\overbar}[1]{\mkern 1.5mu\overline{\mkern-1.5mu#1\mkern-1.5mu}\mkern 1.5mu}
\DeclareMathOperator*{\argmin}{arg\,min}
\begin{document}

\title{Parallel Algorithms for Generating Random Networks with Given Degree Sequences}

\author{
\IEEEauthorblockN{Maksudul Alam\IEEEauthorrefmark{1}\IEEEauthorrefmark{2} and
Maleq Khan\IEEEauthorrefmark{1}}
\IEEEauthorblockA{
\IEEEauthorblockA{\IEEEauthorrefmark{1}Network Dynamics and Simulation Science Laboratory, Virginia Bioinformatics Institute\\}
\IEEEauthorblockA{\IEEEauthorrefmark{2}Department of Computer Science\\}
Virginia Tech, Blacksburg, Virginia 24061 USA \\
Email: \texttt{\{maksud, maleq\}@vbi.vt.edu}}
}

\date{1 March 2014}

\maketitle

\begin{abstract}
Random networks are widely used for modeling and analyzing complex processes. 
Many mathematical models have been proposed to capture diverse real-world networks. 
One of the most important aspects of these models is degree distribution. 
Chung--Lu (CL) model is a random network model, which can produce networks with any given arbitrary degree distribution. 
The complex systems we deal with nowadays are growing larger and more diverse than ever.
Generating random networks with any given degree distribution consisting of billions of nodes and edges or more has become a necessity, which requires efficient and parallel algorithms. 
We present an MPI-based distributed memory parallel algorithm for generating massive random networks using CL model, which takes $O(\frac{m+n}{P}+P)$ time with high probability and $O(n)$ space per processor, where $n$, $m$, and $P$ are the number of nodes, edges and processors, respectively. The time efficiency is achieved by using a novel load-balancing algorithm. Our algorithms scale very well to a large number of processors and can generate massive power--law networks with one billion nodes and $250$ billion edges in one minute using $1024$ processors.
\keywords{massive networks, parallel algorithms, network generator}
\end{abstract}

\begin{IEEEkeywords}
massive networks, parallel algorithms, network generator
\end{IEEEkeywords}



\IEEEpeerreviewmaketitle

\section{Introduction}
The advancements of modern technologies are causing a rapid growth of  complex systems. 
These systems, such as the Internet \cite{Siganos2003}, biological networks \cite{Girvan2002}, social networks \cite{Yang2011,Yang2012}, and various infrastructure networks \cite{Latora2005,Chassin2005}  are sometimes modeled by random graphs for the purpose of studying their behavior.
The study of these complex systems have significantly increased the interest in various random graph models such as Erd\H{o}s--R\'enyi (ER) \cite{Erdos1960}, small-world \cite{Watts1998}, Barab\'{a}si--Albert (BA) \cite{Barabasi1999}, Chung-Lu (CL) \cite{Chung2002}, HOT \cite{Carlson1999}, exponential random graph (ERGM) \cite{Robins2007}, recursive matrix (R-MAT)\cite{Chakrabarti2004}, and stochastic Kronecker graph (SKG) \cite{Leskovec2007,Leskovec2010} models. Among those models, the SKG model has been included in Graph500 supercomputer benchmark \cite{Graph500} due to its simple parallel implementation. The CL model exhibits the similar properties of the SKG model and further has the ability to generate a wider range of degree distributions \cite{Pinar2012}. To the best of our knowledge, there is no parallel algorithm for the CL model.

Analyzing a very large complex system requires generating massive random networks efficiently. As the interactions in a larger network lead to complex collective behavior, a smaller network may not exhibit the same behavior, even if both networks are generated using the same model. In \cite{Leskovec2008}, by experimental analysis, it was shown that the structure of larger networks is fundamentally different from small networks and many patterns emerge only in massive datasets.
Demand for large random networks necessitates efficient algorithms to generate such networks. However, even efficient sequential algorithms for generating such graphs were nonexistent until recently. 
Sequential algorithms are sometimes acceptable in network analysis with tens of thousands of nodes, but they are not appropriate for generating large graphs \cite{Batagelj2005}.
Although, recently some efficient sequential algorithms have been developed \cite{Chakrabarti2004,Batagelj2005,Leskovec2007,Miller2011}, these algorithms can generate networks with only millions of nodes in a reasonable time. But, generating networks with billions of nodes can take an undesirably longer amount of time. Further, a large memory requirement may even prohibit generating such large networks using these sequential algorithms. Thus, distributed-memory parallel algorithms are desirable in dealing with these problems. Shared-memory parallel algorithms also suffer from the memory restriction as these algorithms use the memory of a single machine. Also, most shared-memory systems are limited to only a few parallel processors whereas distributed-memory parallel systems are available with hundreds or thousands of processors.

In this paper, we present a time-efficient  MPI--based distributed memory parallel algorithm for generating random networks from a given sequence of expected degrees using the CL model. To the best of our knowledge, it is the first parallel algorithm for the CL model. The most challenging part of this algorithm is load-balancing. Partitioning the nodes with a balanced computational load is a non trivial problem. In a sequential setting, many algorithms for the load-balancing problem were studied \cite{Manne1995,Olstad1995,Pinar2004}. 
Some of them are exact and some are approximate. 
These algorithms uses many different techniques such as heuristic, iterative refinement, dynamic programming, and parametric search. All of these algorithms require at least $\Omega(n+P\log{n})$ time, where $n$, $P$ are the number of nodes and processors respectively.  To the best of our knowledge, there is no parallel algorithm for this problem. In this paper, we present a novel and efficient parallel algorithm for computing the balanced partitions in $O(\frac{n}{P}+P)$ time. The parallel algorithm for load balancing can be of independent interest and probably could be used in many other problems. Using this load balancing algorithm, the parallel algorithm for the CL model takes an overall runtime of $O(\frac{n+m}{P}+P)$ {w.h.p.}. The algorithm requires $O(n)$ space per processor. Our algorithm scales very well to a large number of processors and can generate a power-law networks with a billion nodes and $250$ billion edges in memory in less than a minute using $1024$ processors.
The rest of the paper is organized as follows. 
In Section~\ref{Section:PCL:chung-lu-model} we describe the problem and the efficient sequential algorithm. 
In Section~\ref{Section:TimeEfficient}, we present the parallel algorithm along with analysis of partitioning and load balancing. 
Experimental results showing the performance of our parallel algorithms are presented in Section~\ref{Section:PCL:exp}. We conclude in Section~\ref{Section:PCL:conclusion}.

\section{Preliminaries and Notations} \label{Section:PCL:prelim}
In the rest of the paper we use the following notations. We denote a network by $G(V,E)$, where $V$ and $E$ are the sets of vertices (nodes) and edges, respectively, with $m = |E|$ edges and $n = |V|$ vertices labeled as $0, 1, 2, \dots, n-1$. We use the terms \emph{node} and \emph{vertex} interchangeably. 

We develop parallel algorithm for the message passing interface (MPI) based distributed memory system, where the processors do not have any shared memory and each processor has its own local memory. The processors can exchange data and communicate with each other by exchanging messages. The processors have a shared file system and they read-write data files from the same external memory. However, such reading and writing of the files are done independently.

We use $K$, $M$ and $B$ to denote thousands, millions and billions, respectively; e.g., $2B$ stands for two billion.

\section{Chung--Lu Model and Efficient Sequential Algorithm}
\label{Section:PCL:chung-lu-model}
Chung--Lu (CL) model \cite{Chung2002} generates random networks from a given sequence of expected degrees. We are given $n$ nodes and a set of non-negative weights $w=(w_0,\ldots w_{n-1})$ assuming $\max_{i}w_{i}^2 < S$, where $S=\sum_{k}w_{k}$ \cite{Chung2002}. For every pair of nodes $i$ and $j$, edge $(i,j)$ is added to the graph with probability $p_{i,j}=w_{i}w_{j}/S$.
If no self loop is allowed, i.e., $i \neq j$, the expected degree of node $i$ is given by $\sum_{j}w_{i}w_{j}/S=w_{i} - w^2_{i}/S$. For massive graphs, where $n$ is very large, the average degree converges to $w_i$, thus $w_i$ represents the expected degree of node $i$ \cite{Miller2011}.

The na\"{\i}ve algorithm of CL model for an undirected graph with $n$ nodes takes each of the $n(n-1)/2$ possible node pairs $\{i,j\}$ and creates the edge with probability $p_{i,j}$, therefore requiring $O(n^2)$ time. An $O(n+m)$ algorithm was proposed in \cite{Miller2011}  to generate networks assuming $w$ is sorted in non-increasing order, where $m$ is the number of edges. It is easy to see that $O(n+m)$ is the best possible runtime to generate $m$ edges. The algorithm is based on the edge skipping technique introduced in \cite{Batagelj2005} for Erd\H{o}s--R\'enyi model. Adaptation of that technique leads to the efficient sequential algorithm in \cite{Miller2011}. The pseudocode of the algorithm is given in Algorithm~\ref{Algorithm:PCL:scl}, consisting of two procedures \Call{Serial--CL}{} and \Call{Create--Edges}{}. Note that we restructured Algorithm~\ref{Algorithm:PCL:scl} by defining procedure \Call{Create--Edges}{} to use it without any changes later in our parallel algorithm. Below we provide an overview and a brief description of the algorithm (for complete explanation and correctness see \cite{Miller2011}). 
\begin{algorithm}[t]
\caption{Sequential Chung--Lu Algorithm}
\label{Algorithm:PCL:scl}
\begin{algorithmic}[1]
\Procedure {Serial--CL}{$w$}
	\State{$S \leftarrow \sum_k w_k$}\label{Line:PCL:SCL:S}
	\State{$E \leftarrow $ \Call{Create--Edges}{$w$, $S$, $V$}}\label{Line:PCL:SCL:CreateEdges}
\EndProcedure
\Procedure{Create--Edges}{$w$, $S$, $V$}
	\State{$E \leftarrow \emptyset$}
	\ForAll {$i$ $\in$ $V$}\label{Line:PCL:SCL:foralli}
		\State $j \leftarrow i+1$\label{Line:PCL:SCL:j}, {$p \leftarrow \min(w_iw_j/S, 1)$}
		\While{$j < n$ and $p > 0$}\label{Line:PCL:SCL:while}
		    \If{$p \ne 1$}
			    \State{choose a random $r \in (0,1)$}
			    \State{$\delta \leftarrow \left\lfloor\log (r)/\log (1-p) \right\rfloor$}\label{Line:PCL:SCL:delta}		
			\Else
				\State{$\delta \leftarrow 0$}
		    \EndIf
		    \State{$v \leftarrow j + \delta$}\label{Line:PCL:SCL:v} \Comment{ skip $\delta$ edges}
		    \If{$v < n$}
		        \State{$q \leftarrow \min(w_{i}w_{v}/S, 1)$ }\label{Line:PCL:SCL:q}
		        \State{choose a random $r \in (0, 1)$}
		        \If{$r < q/p$} 
			        \State{$E \leftarrow E \cup \{i, v\}$}\label{Line:PCL:SCL:E}
	            \EndIf
	            \State{$p \leftarrow q$,\quad$j \leftarrow v + 1$}\label{Line:PCL:SCL:jv1}
		    \EndIf
		\EndWhile	
	\EndFor
	\State{\Return {$E$}}
\EndProcedure
\end{algorithmic}
\end{algorithm}

The algorithm starts at  \Call{Serial--CL}{}, which computes the sum $S$ and calls procedure \Call{Create--Edges}{$w,S,V$}, where $V$ is the entire set of nodes. For each node $i \in V$, the algorithm selects some random nodes $v$ from $[i+1, n-1]$, and creates the edges $(i, v)$. A na\"{\i}ve way to select the nodes $v$ from $[i+1, n-1]$ is: for each $j \in [i+1, n-1]$, select $j$ independently with probability $p_{i,j}=w_{i}w_{j}/S$, leading to an algorithm with run time $O(n^2)$. Instead, the algorithm skips the nodes that are not selected by a random skip length $\delta$ as follows. 
For each $i \in V$ (Line~\ref{Line:PCL:SCL:foralli}),
the algorithm starts with $j=i+1$ and computes a random skip length $\delta \leftarrow \left\lfloor \frac{\log (r)}{\log (1-p)}\right\rfloor$, where $r$ is a real number in $(0,1)$ chosen uniformly at random and $p = p_{i,j}=w_{i}w_{j}/S$. Then node $v$ is selected by skipping the next $\delta$ nodes (Line~\ref{Line:PCL:SCL:v}), and edge $(i,v)$ is selected with probability $q/p$, where $q = p_{i,v} = w_{i}w_{v}/S$ (Line~\ref{Line:PCL:SCL:q}--\ref{Line:PCL:SCL:E}). Then from the next node $j+v$, this cycle of skipping and selecting edges is repeated (while loop in Line~\ref{Line:PCL:SCL:while}--\ref{Line:PCL:SCL:jv1}). 
As we always have $i<j$ and no edge $(i,j)$ can be selected more than once, this algorithm does not create any self-loop or parallel edges.
As the set of weights $w$ is sorted in non-increasing order, for any node $i$, the probability $p_{i,j} = w_iw_j/S$ decreases monotonically with the increase of $j$. It is shown in \cite{Miller2011} that for any $i, j$, edge $(i,j)$ is included in $E$ with probability exactly $w_iw_j/S$, as desired, and that the algorithm runs in $O(n+m)$ time.

\section{Parallel Algorithm for the CL Model}
\label{Section:TimeEfficient}
Next we present our distributed memory parallel algorithm for the CL model.  Although our algorithm generates undirected edges, for the ease of discussion we consider $u$ as the \emph{source node} and $v$ as the \emph{destination node} for any edge $(u,v)$ generated by the procedure \Call{Create--Edges}{}. 
Let $T_u$ be the task of generating the edges from source node $u$ (Lines~\ref{Line:PCL:SCL:foralli}--\ref{Line:PCL:SCL:jv1} in Algorithm~\ref{Algorithm:PCL:scl}). It is easy to see that for any $u \ne u'$ tasks $T_u$ and $T_{u'}$ are independent, i.e., tasks $T_u$ and $T_{u'}$ can be executed independently by two different processors. Now execution of procedure \Call{Create--Edges}{$w, S, V$} is equivalent to executing the set of tasks $\{T_u:u \in V\}$.  Efficient parallelization of Algorithm~\ref{Algorithm:PCL:scl} requires:

\begin{itemize}
\item Computing the sum $S = \sum_{k=0}^{n-1}{w_{k}}$ in parallel
\item Dividing the  task of executing \Call{Create--Edges}{} into independent subtasks
\item Accurately estimating the computational cost for each task
\item Balancing load among the processors 
\end{itemize}

To compute the sum $S$ efficiently, a parallel sum operation is performed on  $w$ using $P$ processors, which takes $O(\frac{n}{P}+\log{P})$ time. To divide the task of executing procedure \Call{Create--Edges}{} into independent subtasks, the set of nodes $V$ is divided into $P$ disjoint subsets $V_1,V_2,\ldots, V_P$; that is, $V_i \subset V$, such that for any $i \ne j$, $V_i \cap V_j = \emptyset$ and $\bigcup_i V_i=V$. Then $V_i$ is assigned to processor $P_{i}$, and $P_{i}$ execute the tasks $\{T_u : u \in V_i\}$; that is, $P_{i}$ executes \Call{Create--Edges}{$w, S, V_i$}. 

Estimating and balancing computational loads accurately are the most challenging tasks. To achieve good speedup of the parallel algorithm, both tasks must also be done in parallel, which is a non-trivial problem. A good load balancing is achieved by properly partitioning the set of nodes $V$ such that the computational loads are equally distributed among the processors. We use two classes of partitioning schemes named consecutive partitioning (CP) and round-robin partitioning (RRP). In CP scheme consecutive nodes are assigned to each partition, whereas in RRP scheme nodes are assigned to the partitions in a round-robin fashion. 
The use of various partitioning schemes is not only interesting for understanding the performance of the algorithm, but also useful in analyzing the generated networks. It is sometimes desirable to generate networks on the fly and analyze it without performing disk I/O. Different partitioning schemes can be useful for different network analysis algorithms. 
Many network analysis algorithms require partitioning the graph into an equal number of nodes (or edges) per processor. Some algorithms also require the consecutive nodes to be stored in the same processor.
Before discussing the partitioning schemes in detail, we describe some formulations that are applicable to all of these schemes.

Let $e_{u}$ be the expected number of edges produced and $c_{u}$ be the computational cost in task $T_{u}$ for a \textit{source node} $u$. For the sake of simplicity, we assign one unit of  time to process a node or an edge. With $S = \sum_{v=0}^{n-1}w_{v}$, we have:
\begin{align}
e_{u} &= \textstyle \sum_{v=u+1}^{n-1}p_{u,v}  =\sum_{v=u+1}^{n-1}\frac{w_{u}w_{v}}{S} =\frac{w_u}{S}\sum_{v=u+1}^{n-1}w_{v} \label{Equation:PCL:eu}\\
c_{u} &= \textstyle  e_{u} + 1 \label{Equation:PCL:cu}
\end{align}

For two nodes $u,v\in V$ such that $u<v$, we have $c_{u} \geq c_{v}$ (see Lemma~\ref{lemma:eugeqev} in Appendix~\ref{Section:Appendix}). The expected number of edges generated by the tasks $\{T_u : u \in V_i\}$ is given by $m_{i}=\sum_{u \in V_{i}}e_{u}$. Note that the expected number of edges in the generated graph, i.e., the expected total number of edges generated by all processors is $m=|E|=\sum_{i=0}^{P-1} m_{i}= \sum_{u=0}^{n-1}e_{u}$.
The computational cost for processor $P_i$ is given by:
$\textstyle c(V_i) = \sum_{u \in V_{i}}c_{u} = \sum_{u \in V_{i}} (e_{u}+1) = m_i + |V_i|.
\label{Equation:PCL:pcost}
$
Therefore, the total cost for all processors is given by:
\begin{equation}
\textstyle
\sum_{i=0}^{P-1} c(V_i) = \sum_{i=0}^{P-1} \left( m_i + |V_i| \right) = m + n
\label{Equation:PCL:totalcost}
\end{equation}

\subsection{Consecutive Partitioning (CP)}
\label{Section:PCL:consecutive-partitioning}
Let partition $V_i$ starts at node $n_i$ and ends at node $n_{i+1}-1$, where $n_0=0$ and $n_P=n$, i.e., $V_i=\{n_i,n_i+1, \ldots,n_{i+1}-1 \}$ for all $i$. We say $n_{i}$ is the \textit{lower boundary} of partition $V_{i}$.  
A na\"{\i}ve way for partitioning $V$ is where each partition consists of an equal number of nodes, i.e., $|V_i|=\left\lceil\frac{n}{P}\right\rceil$ for all $i$. To keep the discussion neat, we simply use $\frac{n}{P}$. 
Although the number of nodes in each partition is equal, the computational cost among the processors is very imbalanced. For two consecutive partitions $V_i$ and $V_{i+1}$, $c(V_i) > c(V_{i+1})$ for all $i$ and the difference is at least $\frac{n^2}{SP^2} \overbar{W}_{i}\overbar{W}_{i+1}$, where $\overbar{W}_{i} = \frac{1}{|V_i|}\sum_{u\in V_i}{w_u}$, the average weight (degree) of the nodes in $V_i$ (see  Lemma~\ref{lemma:unp} in Appendix~\ref{Section:Appendix}). Thus $c(V_i)$ gradually decreases with $i$ by a large amount leading to a very imbalanced distribution of the computational cost. 

To demonstrate that na\"{\i}ve CP scheme leads to imbalanced distribution of computational cost, we generated two networks, both with one billion nodes: $i$) Erd\H{o}s--R\'enyi network with an average degree of $500$, and $ii$) Power--Law network with an average degree of $49.72$. We used $512$ processors, which is good enough for this experiment.
\begin{figure*}[t]
\centering
{\includegraphics[width=\textwidth]{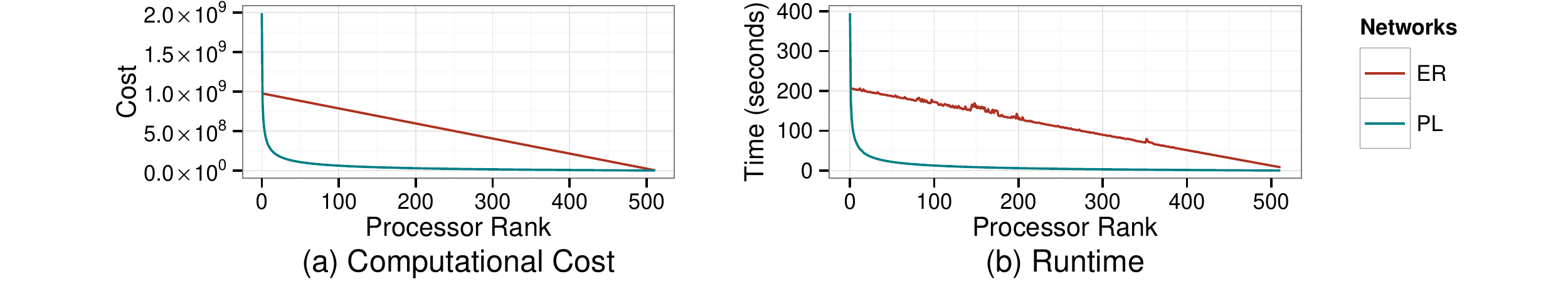}}
\caption{Computational cost and runtime in na\"{\i}ve CP scheme}
\label{Figure:PCL:UNP-LB}
\end{figure*}
\Figure~\ref{Figure:PCL:UNP-LB} shows the computational cost and runtime per processor. In both cases, the cost is not balanced. For power-law network the imbalance of computational cost is more prominent. Observe that the runtime is almost directly proportional to the cost, which justifies our choice of cost function. That is balancing the cost would also balance the runtime.

We need to find the partitions $V_{i}$ such that each partition has equal cost, i.e.,  $c(V_i)\approx\overbar{Z}$, where $\overbar{Z}=(m+n)/{P}$ is the average cost per processor. We refer such partitioning scheme as uniform cost partitioning (UCP).
Although determining the partition boundaries in the na\"{\i}ve scheme is very easy, finding the boundaries in UCP scheme is a non trivial problem and requires: (i) computing the cost $c_{u}$ for each node $u\in V$ and (ii) finding the boundaries of the partitions such that every partition has a cost of $\overbar{Z}$. 
Na\"{\i}vely computing costs for all nodes takes $O(n^2)$ as each node independently requires $O(n)$ time using Equation~\ref{Equation:PCL:cu} and \ref{Equation:PCL:eu}. A trivial parallelization achieves $O(n^2/P)$ time. 
However, our goal is to parallelize the computation of the costs in $O(n/{P}+\log{P})$ time.

Finding the partition boundaries such that the maximum cost of a partition is minimized is a well-known problem named \textit{chains-on-chains partitioning} (CCP) problem \cite{Pinar2004}. In CCP, a sequence of $P-1$ separators are determined to divide a chain of $n$ tasks with associated non-negative weights ($c_{u}$) into $P$ partitions so that the maximum cost in the partitions is minimized.
Sequential algorithms for CCP are studied quite extensively \cite{Manne1995,Olstad1995,Pinar2004}. Since these algorithms take at least $\Omega(n+P\log{n})$ time, using any of these sequential algorithms to find the partitions, along with the parallel algorithm for the CL model, does not scale well. To the best of our knowledge, there is no parallel algorithm for CCP problem. We present a novel parallel algorithm for determining the partition boundaries which takes $O(n/P+P)$ time in the worst case.
\begin{figure*}[t]
\centering
\includegraphics[width=\textwidth]{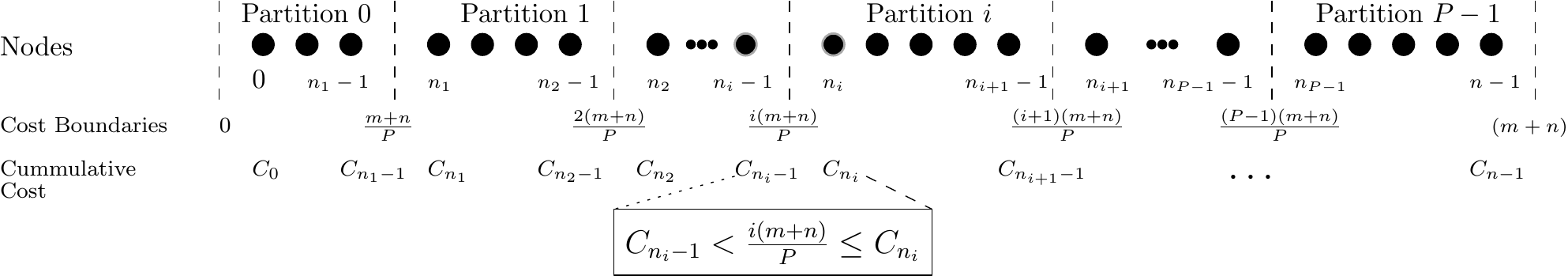}
\caption{Uniform cost partitioning (UCP) scheme}
\label{Figure:PCL:uep-partitions}
\end{figure*}

To determine the partition boundaries, instead of using $c_{u}$ directly, we use the cumulative cost $C_{u}=\sum_{v=0}^{u}c_{v}$. We call a partition $V_{i}$ a \textit{balanced partition} if the computational cost of $V_i$ is 
$c(V_i)=\sum_{u=n_{i}}^{n_{i+1}-1}c_{u}= C_{n_{i+1}-1} - C_{n_{i}-1}\approx\overbar{Z}$. 
Also note that for lower boundary $n_{i}$ of partition $V_{i}$ we have, $C_{n_i-1} < i\overbar{Z} \leq C_{n_i}$ for $0 < i \leq P-1$. Thus, we have:
\begin{equation}
\textstyle n_i = {\argmin_{u} \left( C_{u} \geq i\overbar{Z} \right)} \label{Equation:PCL:Partition-Boundary}
\end{equation}
In other words, a node $u$ with cumulative cost $C_{u}$ belongs to partition $V_i$ such that $i={\left\lfloor C_{u}/\overbar{Z}\right\rfloor}$. The partition scheme is shown visually in \Figure~\ref{Figure:PCL:uep-partitions}.

\begin{algorithm}[t]
\caption{Uniform Consecutive Partition}
\label{Algorithm:PCL:UCP}
\begin{algorithmic}[1]
    \Procedure{\textbf{UCP}}{$V$, $w$, $S$}
    \State{\Call{Calc--Cost}{$w$, $V$, $S$}}
    \State{\Call{Make--Partition}{$w$, $V$, $S$}}\label{Line:PCL:UCP:MakePartition}
    \EndProcedure
    \vspace{0.5em}
    \Procedure{\textbf{Calc--Cost}}{$w$, $V$, $S$}
    \State{$i \leftarrow$ processor id}
    \State{$s_{i} \leftarrow \sum_{u=i\frac{n}{P}}^{(i+1)\frac{n}{P}-1} w_u$}\label{Line:PCL:UCP:si}
    \State{\textbf{In Parallel:} $S_i \leftarrow \sum_{j=0}^{i-1}s_j $ }
    \label{Line:PCL:UCP:Si}
    \State{$u \leftarrow \frac{in}{P}$}\label{Line:PCL:UCP:u}
    \State{$\sigma_u \leftarrow S_{i}$}
    \State{$C_{u} \leftarrow e_{u} + 1 = \frac{w_u}{S} (S - \sigma_u - w_u)+1$}\label{Line:PCL:UCP:Cu}
    \For{$u=\frac{in}{P}+1$ to $\frac{(i+1)n}{P}-1$}\label{Line:PCL:UCP:foru1}
    \State{$\sigma_u \leftarrow \sigma_u+w_{u}$ }\label{Line:PCL:UCP:sigmau}
    \State{$e_{u} \leftarrow \frac{w_u}{S}(S - \sigma_u - w_{u})$}\label{Line:PCL:UCP:eu}
    \State{$C_{u} \leftarrow C_{u-1}+e_{u}+1$}\label{Line:PCL:UCP:Cu1}
    \EndFor
    \State{$z_{i} \leftarrow C_{\frac{(i+1)n}{P}-1}$}\label{Line:PCL:UCP:zi}
    
    \State{\textbf{In Parallel:} $ Z_i \leftarrow \sum_{j=0}^{i-1}z_j $ }\label{Line:PCL:UCP:Zi}
    \For{$u=\frac{in}{P}$ to $\frac{(i+1)n}{P}-1$}\label{Line:PCL:UCP:foru2}
    \State{$C_{u}=C_{u}+Z_{i}$}\label{Line:PCL:UCP:Cu2}
    \EndFor
    \EndProcedure
		\Procedure{\textbf{Make--Partition}}{$w$, $V$, $S$}
		\State{\textbf{In Parallel:} $Z \leftarrow \sum_{i=0}^{P-1} z_i$ }
		\label{Line:PCL:UCP:Z}
		\State{$\overbar{Z} \leftarrow Z/P$}\label{Line:PCL:UCP:Zbar}
		\State{\Call{Find--Boundaries}{$\frac{in}{P},\frac{(i+1)n}{P}-1,C,\overbar{Z}$}\label{Line:PCL:UCP:CallBoundary}}
		\ForAll{$n_k \in B_i$}\label{Line:PCL:UCP:fornk}
		\State{Send  $n_k$ to $P_{k}$ and $P_{k+1}$}
		\EndFor
		
		\State{Receive boundaries $n_{i}$ and $n_{i+1}$}\label{Line:PCL:UCP:MSG}
		
		\State{\Return{$V_{i}=[n_{i},n_{i+1}-1]$}}
\EndProcedure
\vspace{0.5em}
\Procedure{\textbf{Find--Boundaries}}{$s$, $e$, $C$, $\overbar{Z}$}
		\If{$\left\lfloor\frac{C_s}{\overbar{Z}}\right\rfloor=\left\lfloor\frac{C_e}{\overbar{Z}}\right\rfloor$}
		\Return \label{Line:28}
		\EndIf
		\State{$m \leftarrow \frac{(e+s)}{2}$\label{Line:29}}
		\If{$\left\lfloor\frac{C_{m}}{\overbar{Z}}\right\rfloor\neq\left\lfloor\frac{C_{m+1}}{\overbar{Z}}\right\rfloor$\label{Line:30}}		
		\State{$n_{\left\lfloor\frac{C_{m+1}}{\overbar{Z}}\right\rfloor} \leftarrow m+1$\label{Line:31}}
		\EndIf
		\State{\Call{Find--Boundaries}{$s, m, C, \overbar{Z}$}\label{Line:32}}
		\State{\Call{Find--Boundaries}{$m+1, e, C, \overbar{Z}$}\label{Line:33}}
\EndProcedure
\end{algorithmic}
\end{algorithm}

\textbf{Computing $C_u$ in Parallel}. Computing $C_u$ has two difficulties: i) for a node $u$, computing $c_{u}$ by using Equation~\ref{Equation:PCL:eu} and \ref{Equation:PCL:cu} directly is inefficient and ii) $C_{u}$ is dependent on $C_{u-1}$, which is hard to parallelize. To overcome the first difficulty, we use the following form of $e_u$ to calculate $c_u$. From Equation~\ref{Equation:PCL:eu} we have:
\begin{eqnarray}
e_{u} &=& \frac{w_u}{S}\sum_{v=u+1}^{n-1}w_{v}
		=\frac{w_u}{S} \left( \sum_{v=0}^{n-1}w_v-\sum_{v=0}^{u}w_v \right)\nonumber\\
		&=& \frac{w_u}{S} \left( \sum_{v=0}^{n-1}w_v-\sum_{v=0}^{u-1}w_v - w_u \right) \nonumber\\
c_u	&=& \textstyle e_{u}+1 \nonumber\\
&=&\frac{w_u}{S} \left( S - \sigma_{u} - w_u\right) +1 \left[\text{where }\sigma_{u}=\sum_{v=0}^{u-1}w_{v}\right]
\label{Equation:PCL:expected-edges}
\end{eqnarray}
Therefore, $c_{u}$ can be computed by successively updating $\sigma_{u}=\sigma_{u-1}+w_{u-1}$.

To deal with the second difficulty, we compute $C_{u}$ in several steps using procedure \Call{Calc--Cost}{} as shown in Algorithm~\ref{Algorithm:PCL:UCP} (see \Figure~\ref{Figure:PCL:uep-scheme} in Appendix~\ref{Section:Appendix} for a visual representation of the algorithm). In each processor, the partitioning algorithm starts with procedure \Call{UCP}{} that calculates the cumulative costs using procedure  \Call{Calc--Cost}{}. Then procedure \Call{Make--Partition}{} is used to compute the partitioning boundaries.
At the beginning of the \Call{Calc--Cost}{} procedure, the task of computing costs for the $n$ nodes are distributed among the $P$ processors equally, i.e., processor $P_{i}$ is responsible for computing costs for the nodes  from $i\frac{n}{P}$ to $(i+1)\frac{n}{P}-1$. Note that these are the nodes that processor $P_{i}$ works with while executing the partitioning algorithm to find the boundaries of the partitions. 

In Step \textbf{1} (Line~\ref{Line:PCL:UCP:si}), $P_{i}$ computes a partial sum $s_{i}=\sum_{u=\frac{in}{P}}^{\frac{(i+1)n}{P}-1}w_u$ independently of other processors. In Step~\textbf{2} (Line~\ref{Line:PCL:UCP:Si}),  \emph{exclusive prefix sum} $S_i = \sum_{j=0}^{i-1}s_j$ is calculated for all $s_{i}$ where $0 \leq i \leq P-1$ and $S_0=0$. This exclusive prefix sum can be computed in parallel in $O(\log P)$ time \cite{Sanders2006}. We have:
\begin{align*}
S_i &=\textstyle \sum_{j=0}^{i-1} s_j = \sum_{j=0}^{i-1} \sum_{u=\frac{jn}{P}}^{\frac{(j+1)n}{P}-1}w_u = \sum_{u=0}^{\frac{in}{P}-1}w_u = \sigma_{\frac{in}{P}}
\end{align*}
In Step~\textbf{3}, $P_i$ partially computes $C_{u}$, where $\frac{in}{P} \leq u < \frac{(i+1)n}{P}$. By assigning $\sigma_{\frac{in}{P}}=S_i$,  $C_{\frac{in}{P}}$ is determined partially using Equation~\ref{Equation:PCL:expected-edges} in constant time (Line~\ref{Line:PCL:UCP:Cu}). For each $u$, values of $\sigma_{u}$ , $e_{u}$ and $C_{u}$ are also determined in constant time (Line~\ref{Line:PCL:UCP:foru1}--\ref{Line:PCL:UCP:Cu1}), where $\frac{in}{P}+1 \leq u \leq \frac{(i+1)n}{P}-1$.  
After Step~\textbf{3}, we have $C_{u}=\sum_{v=\frac{in}{P}}^{u}c_v$. To get the final value of $C_{u}=\sum_{v=0}^{u}c_v$, the value  $\sum_{v=0}^{v=\frac{in}{P}-1}c_v$ needs to be added. For a processor $P_i$, let $z_{i}=C_{\frac{(i+1)n}{P}-1}=\sum_{v=\frac{in}{P}}^{\frac{(i+1)n}{P}-1}c_v$. In Step~\textbf{4} (Line~\ref{Line:PCL:UCP:Zi}), another exclusive parallel prefix sum operation is performed on $z_{i}$ so that $$\textstyle Z_i=\sum_{j=0}^{i-1}z_{j}=\sum_{j=0}^{i-1}\sum_{v=\frac{jn}{P}}^{\frac{(j+1)n}{P}-1}c_v=\sum_{v=0}^{\frac{in}{P}-1}c_v.$$ Note that $Z_{i}$ is exactly the value required to get the final cumulative cost $C_{u}$. In Step~\textbf{5} (Lines~\ref{Line:PCL:UCP:foru2}--\ref{Line:PCL:UCP:Cu2}), $Z_i$ is added to $C_{u}$ for $\frac{in}{P} \leq u \leq \frac{(i+1)n}{P}-1$. 

\vspace{0.5em}
\noindent\textbf{Finding Partition Boundaries in Parallel}. The partition boundaries are determined using Equation~\ref{Equation:PCL:Partition-Boundary}. The  procedure \Call{Make--Partition}{} generates the partition boundaries. In Line~\ref{Line:PCL:UCP:Z}, parallel sum is performed on $z_{i}$ to determine $Z=\sum_{0}^{P-1}z_{i}=\sum_{0}^{n-1}c_{u}=n+m$, the total cost and $\overbar{Z}=\frac{Z}{P}$ be the average cost per processor (Line~\ref{Line:PCL:UCP:Zbar}). 
\Call{Find--Boundaries}{} is called to determine the boundaries (Line~\ref{Line:PCL:UCP:CallBoundary}).
From  Equation~\ref{Equation:PCL:Partition-Boundary} it is easy to show that a partition boundary is found between two consecutive nodes $u$ and $u+1$, such that ${\left\lfloor C_{u}/\overbar{Z}\right\rfloor} \neq {\left\lfloor C_{u+1}/\overbar{Z}\right\rfloor}$. Node $u+1$ is the lower boundary of partition $V_i$, where $i={\left\lfloor C_{u+1}/\overbar{Z}\right\rfloor}$.
$P_{i}$ executes \Call{Find--Boundaries}{} from nodes ${in}/{P}$ to ${(i+1)n}/{P}-1$. 
\Call{Find--Boundaries}{} is a divide \& conquer based algorithm to find all the boundaries in that range efficiently using the cumulative costs . All the found boundaries are stored in a local list. In Line~\ref{Line:28}, it is determined whether the range contains any boundary. If the range does not have any boundary, i.e., if ${\left\lfloor C_{s}/\overbar{Z}\right\rfloor} = {\left\lfloor C_{e}/\overbar{Z}\right\rfloor}$, the algorithm returns immediately. Otherwise, it determines the middle of the range $m$ in Line~\ref{Line:29}. In Line~\ref{Line:30}, the existence of a boundary between $m$ and $m+1$ is evaluated. If $m+1$ is indeed a lower partition boundary, it is stored in local list in Line~\ref{Line:31}. In Line~\ref{Line:32} and \ref{Line:33}, \Call{Find--Boundaries}{} is called with the ranges $[s,m]$ and $[m+1,e]$ respectively.
Note that the range $\left[{in}/{P},{(i+1)n}/{P}-1\right]$ may contain none, one or more boundaries. Let $B_i$ be the set of those boundaries. 
Once the set of boundaries $B_i$, for all $i$, are determined, the processors exchange these boundaries with each other as follows. Node $n_{k}$, in some $B_i$, is the boundary between the partitions $V_k$ and $V_{k+1}$, i.e., $n_{k}-1$ is the upper boundary of $V_k$, and $n_{k}$ is the lower boundary of $V_{k+1}$. In  Line~\ref{Line:PCL:UCP:fornk}, for each $n_k$ in the range $\left[{in}/{P},{(i+1)n}/{P}-1\right]$, processor $P_{i}$ sends a boundary message containing $n_{k}$ to processors $P_{k}$ and $P_{k+1}$. Notice that each processor $i$ receives exactly two boundary messages from other processors (Line~\ref{Line:PCL:UCP:MSG}), and these two messages determine the lower and upper boundary of the $i$-th partition $V_i$. That is, now each processor $i$ has  partition $V_i$ and is ready to execute the parallel algorithm for the CL model with UCP scheme.

The runtime of parallel Algorithm~\ref{Algorithm:PCL:UCP}  is $O(\frac{n}{P}+P)$ as shown in Theorem \ref{thm:ucptime}.

\begin{theorem} \label{thm:ucptime}
The parallel algorithm for determining the partition boundaries of the UCP scheme runs in $O(\frac{n}{P}+P)$ time, where $n$ and $P$ are the number of nodes and processors, respectively.
\end{theorem}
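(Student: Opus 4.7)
The plan is to bound the parallel runtime of each of the three sub-procedures invoked by \Call{UCP}{} in Algorithm~\ref{Algorithm:PCL:UCP} --- namely \Call{Calc--Cost}{}, \Call{Make--Partition}{}, and the recursive \Call{Find--Boundaries}{} --- separately, and then add the bounds. The natural decomposition I am aiming for is (i) $O(n/P)$ local work per processor from the linear-time sweeps, (ii) $O(\log P)$ collective work from parallel prefix sums and reductions, and (iii) $O(P)$ worst-case communication for broadcasting partition boundaries.

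For \Call{Calc--Cost}{}, I would walk through its five stages in order. The partial sum at Line~\ref{Line:PCL:UCP:si} is a reduction over $n/P$ weights and costs $O(n/P)$; the two exclusive prefix sums at Lines~\ref{Line:PCL:UCP:Si} and~\ref{Line:PCL:UCP:Zi}, each over $P$ values, take $O(\log P)$ with a standard tree-based parallel algorithm \cite{Sanders2006}; and the two serial \textbf{for} loops at Lines~\ref{Line:PCL:UCP:foru1} and~\ref{Line:PCL:UCP:foru2} do $O(1)$ work per iteration using the incremental update $\sigma_u \leftarrow \sigma_{u-1} + w_{u-1}$ combined with the closed form in Equation~\ref{Equation:PCL:expected-edges}, giving $O(n/P)$ in total. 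Summing yields $O(n/P + \log P)$. Within \Call{Make--Partition}{}, the parallel sum at Line~\ref{Line:PCL:UCP:Z} is again $O(\log P)$, while the boundary-exchange loop in Line~\ref{Line:PCL:UCP:fornk} is the expensive piece: although $\sum_i |B_i| = P-1$ globally, a single processor's range might contain every boundary, forcing up to $2(P-1)$ sends and thereby contributing an $O(P)$ term; each processor always receives exactly two messages in Line~\ref{Line:PCL:UCP:MSG}.

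The step that I expect to require the most care is \Call{Find--Boundaries}{}. It is invoked once per processor on a range of size $s = n/P$, halves the range, performs $O(1)$ work outside of its two recursive calls, and terminates immediately at Line~\ref{Line:28} whenever the current range holds no partition boundary (detectable in $O(1)$ via comparing $\lfloor C_s/\overbar{Z}\rfloor$ with $\lfloor C_e/\overbar{Z}\rfloor$). My plan is to bound the total work by the size of the recursion tree: since the recursion tree is a subset of a fixed balanced binary tree whose leaves are the singletons in the processor's range, and that ambient tree has exactly $2s-1$ nodes, the algorithm performs at most $O(s) = O(n/P)$ work regardless of how many boundaries land in its range. (A sharper bound of $O(k_i \log(s/k_i))$ with $k_i = |B_i|$ follows from the fact that at recursion depth $d$ at most $\min(2^d, k_i)$ sub-ranges contain a boundary, combined with $k \log(s/k) \le s/e$; this refinement is not needed to reach the stated theorem.) Combining the three contributions yields $O(n/P + \log P) + O(n/P) + O(P) = O(n/P + P)$, since $\log P \le P$, which matches the claim.
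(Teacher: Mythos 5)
Your proposal is correct and follows essentially the same decomposition as the paper's proof: $O(n/P)$ local sweeps plus $O(\log P)$ prefix sums in \textsc{Calc--Cost}, an $O(P)$ worst-case message count in \textsc{Make--Partition}, and a bound on \textsc{Find--Boundaries} that yields $O(n/P + P)$ overall. The only difference is cosmetic: where the paper simply asserts the bound $O(\min\{\frac{n}{P}, x\log\frac{n}{P}\})$ for \textsc{Find--Boundaries}, you justify the $O(n/P)$ half of that minimum explicitly via the recursion-tree (ambient balanced binary tree with $2s-1$ nodes) argument, which is a welcome but not substantively different elaboration.
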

\begin{proof}
The parallel algorithm for determining the partition boundaries is shown in  Algorithm~\ref{Algorithm:PCL:UCP}. 
For each processor, Line~\ref{Line:PCL:UCP:si} takes $O(\frac{n}{P})$ time. The exclusive parallel prefix sum operation requires $O(\log P)$ time in Line~\ref{Line:PCL:UCP:Si}. Lines~\ref{Line:PCL:UCP:u}--\ref{Line:PCL:UCP:Cu} take constant time. The for loop at Line~\ref{Line:PCL:UCP:foru1} iterates $\frac{n}{P}-1$ times. Each execution of the for loop takes constant time for Lines~\ref{Line:PCL:UCP:sigmau}--\ref{Line:PCL:UCP:Cu1}. Hence, the for loop at Line~\ref{Line:PCL:UCP:foru1} takes $O(\frac{n}{P})$ time. The prefix sum in Line~\ref{Line:PCL:UCP:Zi} takes $O(\log P)$ time. The for loop at Line~\ref{Line:PCL:UCP:foru2} takes $O(\frac{n}{P})$ time.

The parallel sum operation in Line~\ref{Line:PCL:UCP:Z} takes $O(\log P)$ time using \texttt{MPI\_Reduce} function.
For each processor $P_{i}$, $n_{k}$'s are determined in \Call{Find--Boundaries}{} on the range of $\left[{in}/{P},{(i+1)n}/{P}-1\right]$. Finding a single partition boundary on these $\frac{n}{P}$ nodes require $O(\log{\frac{n}{P}})$ time.
If the range contains $x$ partition boundaries, then it takes  $O(\min \left\{\frac{n}{P},x\log{\frac{n}{P}}\right\})$ time.
For each partition boundary $n_{k}$, processor $i$ sends exactly two messages to the processors $P_k$ and $P_{k-1}$. Thus each processor receives exactly two messages. There are at most $P$ boundaries in $[\frac{in}{P}$, $\frac{(i+1)n}{P}-1]$. Thus, in the worst case, a processor may need to send at most $2P$ messages, which takes $O(P)$ time. Therefore, the total time in the worst case is $O(\frac{n}{P}+\min \left\{\frac{n}{P},P\log{\frac{n}{P}}\right\}+ P)=O(\frac{n}{P}+P)$.
\qed
\end{proof}

\begin{figure}[b]
\centering
{\includegraphics[trim=0.6cm 0.6cm 1.1cm 0.65cm,width=\columnwidth]{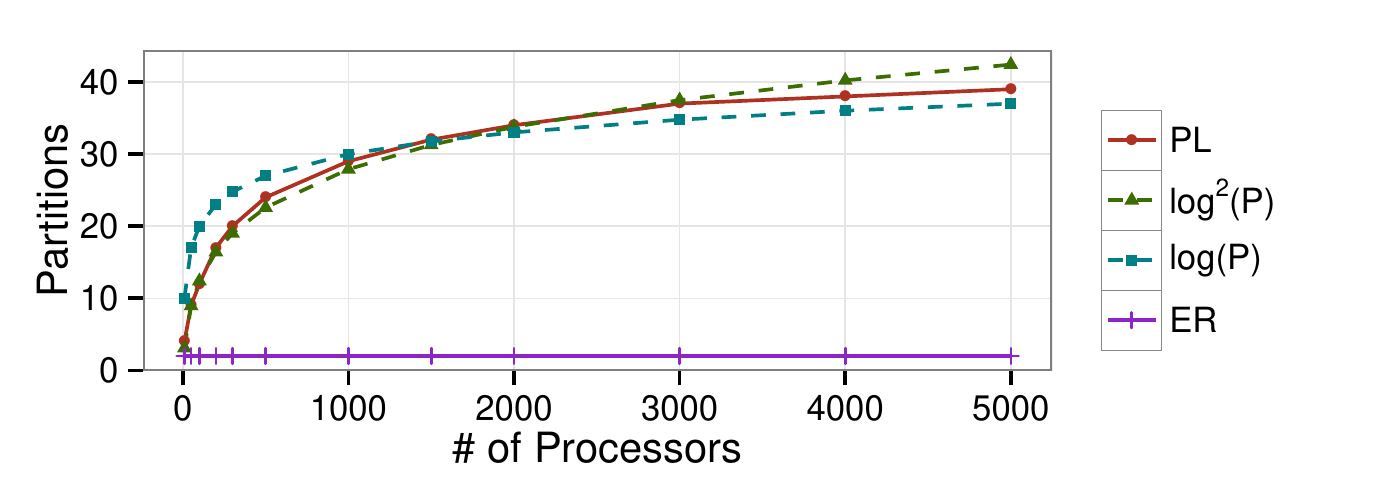}}
\caption{Maximum number of boundaries in a single processor}
\label{Figure:PCL:FirstBlock}
\end{figure}

Theorem \ref{thm:ucptime} shows the worst case runtime of $O(\frac{n}{P}+ P)$.
Notice that this bound on time is obtained considering the case that all $P$ partition boundaries $n_{k}$ can be in a single processor. However, in most real-world networks, it is an unlikely event, especially when the number of processors $P$ is large. Thus it is safe to say that for most practical cases, this algorithm will scale to a larger number of processors than the runtime analysis suggests. Now we experimentally show the number of partition boundaries found in the first partition for some popular networks.
For the ER networks, the maximum number of boundaries in a processor is  $2$, regardless of the number of processors. Even for the power--law networks, which has very skewed degree distribution, the maximum number of boundaries in a single processor is very small.
\Figure~\ref{Figure:PCL:FirstBlock} shows the maximum number of boundaries found in a single processor. Two fitted plots of $\log^2 P$ and $\log P$ is added in the figure for comparison. From the trend, it appears the maximum number of partition boundaries in a processor is somewhere between $O(\log P)$ and $O(\log^2{P})$. Since power--law has one of the most skewed degree distribution among real-world networks, we can expect the runtime to find partition boundaries to be approximately $O(\frac{n}{P}+\log^2P)$ time.

Using the UCP  scheme, our parallel algorithm for generating random networks with the CL model runs in $O(\frac{m+n}{P} + P)$  time as shown in Theorem~\ref{thm:chunglutime}. To prove Theorem~\ref{thm:chunglutime}, we need a bound on computation cost which is shown in Theorem~\ref{thm:comp_load}.

\begin{theorem}\label{thm:comp_load}
The computational cost in each processor is $O(\frac{m+n}{P})$  w.h.p.	
\end{theorem}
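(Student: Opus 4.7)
The plan is to use a multiplicative Chernoff-type concentration inequality on top of the UCP guarantee. By construction of Equation~\ref{Equation:PCL:Partition-Boundary}, for every processor $P_i$ the expected cost satisfies $c(V_i) = C_{n_{i+1}-1} - C_{n_i-1} \leq \overbar{Z} + \max_{u\in V_i} c_u$, so provided no individual node cost $c_u$ exceeds $\overbar{Z}$ (which follows from $\max_u w_u^2 < S$ together with the assumption that the weight sequence is sorted), the deterministic expected cost obeys $c(V_i) = O((m+n)/P)$. The random quantity that must be controlled is the actual cost incurred at runtime.

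First I would decompose the true cost at $P_i$ as $X_i = |V_i| + M_i$, where $M_i$ is the actual number of edges generated with source in $V_i$. Writing
\[
M_i \;=\; \sum_{u\in V_i}\sum_{v=u+1}^{n-1} Y_{u,v},\qquad Y_{u,v}\sim\mathrm{Bernoulli}(p_{u,v}),
\]
the Bernoulli variables $Y_{u,v}$ are mutually independent by the definition of the CL model, since each candidate edge is sampled independently with probability $p_{u,v}=w_u w_v/S$. Hence $M_i$ is a sum of independent $\{0,1\}$ random variables with mean $\mathbb{E}[M_i] = m_i = c(V_i) - |V_i| \leq \overbar{Z}$.

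Next I would apply the standard multiplicative Chernoff bound: for any $\delta\geq 1$,
\[
\Pr\!\left[M_i \geq (1+\delta)\,\mathbb{E}[M_i]\right] \;\leq\; \exp\!\left(-\delta\,\mathbb{E}[M_i]/3\right).
\]
Choosing $(1+\delta)\mathbb{E}[M_i] = \Theta(\max\{\overbar{Z},\,\alpha\log n\})$ for a sufficiently large constant $\alpha$ drives the right-hand side below $n^{-c}$ for any desired constant $c$. A union bound over the $P\leq n$ processors then gives $X_i = O((m+n)/P)$ for all $i$ simultaneously with probability at least $1-n^{-c+1}$, i.e.\ w.h.p.

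The main obstacle is the regime $\overbar{Z} = o(\log n)$: there the expected per-processor cost is too small for Chernoff to concentrate sharply, and the bound degrades to $O(\log n + n/P)$. For the massive-network setting targeted by the paper we have $(m+n)/P = \Omega(\log n)$, so this additive $\log n$ term is absorbed into $O((m+n)/P)$. The only other subtlety is confirming that the approximation $c(V_i)\approx\overbar{Z}$ inherited from \Call{Find--Boundaries}{} is tight up to one node's worth of cost; this follows directly from the definition of the boundary $n_i = \argmin_u(C_u \geq i\overbar{Z})$ and the monotonicity $c_u \geq c_{u+1}$ (Lemma~\ref{lemma:eugeqev}).
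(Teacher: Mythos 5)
Your proposal is correct and follows essentially the same route as the paper's proof: decompose the per-processor cost into $|V_i|$ plus the number of generated edges, argue that the edge-skipping procedure is stochastically equivalent to independent Bernoulli edge indicators, apply a multiplicative Chernoff bound to the edge count, and finish with a union bound over processors together with the UCP guarantee that the expected cost per partition is $O((m+n)/P)$. The only differences are cosmetic: the paper fixes $\delta=\tfrac12$ and excludes the low-expectation regime by assuming $m_i\ge 270$ and $m\gg P$, whereas you exclude it by assuming $(m+n)/P=\Omega(\log n)$ and union-bound against $n^{-c}$ rather than $m_i^{-3}$.
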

\begin{proof}
For each $u \in V_i$  and $v > u$, $(u,v)$  is a potential edge in processor $P_i$, and $P_i$ creates the edge with probability $p_{u,v}=\frac{w_uw_v}{S}$  where $S = \sum_{v\in V}{w_v}$. Let $x$ be the number of potential edges in $P_i$, and these potential edges are denoted by $f_1, f_2, \ldots, f_x$  (in any arbitrary order). Let $X_k$  be an indicator random variable such that $X_k = 1$  if $P_i$  creates $f_k$  and $X_k = 0$  otherwise. Then the number of edges created by $P_i$  is $X=\sum_{k=1}^{x}{X_k}$. 

As discussed in Section~\ref{Section:PCL:chung-lu-model}, generating the edges efficiently by applying the edge skipping technique is stochastically equivalent to generating each edge $(u,v)$ independently with probability $p_{u,v}=\frac{w_uw_v}{S}$. Let $\xi_{e}$ be the event that edge $e$ is generated. Regardless of the occurrence of any event $\xi_{e}$ with $e \ne (u,v)$, we always have $\Pr\{\xi_{(u,v)}\} = p_{u,v}=\frac{w_uw_v}{S}$. Thus, the events $\xi_{e}$ for all edges $e$ are mutually independent. Following the definitions and formalism given in Section~\ref{Section:PCL:consecutive-partitioning}, we have the expected number of edges created by $P_i$, denoted by $\mu$, as 
$$\textstyle\mu = E[X] = \sum_{u\in V_i} e_u = m_i.$$
Now we use the following standard Chernoff  bound for independent indicator random variables and for any $0< \delta < 1$, 
$$\textstyle\Pr\left\{X \ge (1+\delta) \mu\right\} \le e^{-\delta^2 \mu/3}.$$ 
Using this Chernoff  bound with $\delta = \frac{1}{2}$, we have 
$$\textstyle\Pr\left\{X \ge \frac{3}{2} m_i\right\} \le e^{- m_i/12} \le \frac{1}{m_i^3}$$ for any $m_i \ge 270$. We assume $m \gg P$  and consequently $m_i > P$  for all $i$. Now using the union bound, 
$$\textstyle\Pr\left\{X \ge \frac{3}{2} m_i\right\} \le m_i\frac{1}{m_i^3} = \frac{1}{m_i^2}$$
for all $i$  simultaneously. 
Then with probability at least $1 - \frac{1}{m_i^2}$, the computation cost $X + |V_i|$ is bounded by $\frac{3}{2}m_i+|V_i|=O(m_i+|V_i|)$. By
construction of the partitions by our algorithm, we have $O\left(m_i+|V_i|\right) = O\left(\frac{m +n}{P}\right)$. Thus the computation cost in all processors is $O\left( \frac{m +n}{P}\right)$ w.h.p.
\qed
\end{proof}

\begin{theorem}\label{thm:chunglutime}
Our parallel algorithm with UCP  scheme for generating random networks with the CL model runs in $O(\frac{m+n}{P} + P)$  time w.h.p.
\end{theorem}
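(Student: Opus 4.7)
The plan is to decompose the end-to-end runtime into the three phases of the algorithm and add them together: (i) the global weight sum, (ii) the UCP partition-finding stage, and (iii) the per-processor execution of \Call{Create--Edges}{}. Steps (i) and (ii) already have clean deterministic bounds, so the only randomness to manage sits in phase (iii), where the Chernoff estimate from Theorem~\ref{thm:comp_load} will be invoked.

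First I would recall that $S=\sum_k w_k$ is computed with a parallel reduction on the distributed weight vector, which runs in $O(n/P + \log P)$ time on $P$ processors. Next, by Theorem~\ref{thm:ucptime}, the partitioning stage (Algorithm~\ref{Algorithm:PCL:UCP}) produces the balanced partitions $V_0,\ldots,V_{P-1}$ in $O(n/P + P)$ time. Since $\log P = O(P)$, these two phases together cost $O(n/P + P)$.

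The heart of the argument is phase (iii). I would observe that once $P_i$ holds $V_i$, it simply invokes \Call{Create--Edges}{$w,S,V_i$}, whose per-node work is dominated by one unit of setup plus one unit of work per edge actually emitted from a source in $V_i$ (the edge-skipping inner loop in Lines~\ref{Line:PCL:SCL:while}--\ref{Line:PCL:SCL:jv1} executes once per selected destination). With the normalization that processing a node or an edge costs one unit, the wall-clock work on $P_i$ is exactly $|V_i|$ plus the random number $X$ of edges it generates, which matches the cost function $c(V_i)$ used in Section~\ref{Section:PCL:consecutive-partitioning}. Applying Theorem~\ref{thm:comp_load}, this quantity is $O((m+n)/P)$ with high probability on every processor simultaneously (by the union bound already taken in that theorem's proof).

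Summing the three phases, the total runtime is
\[
O\!\left(\tfrac{n}{P} + \log P\right) + O\!\left(\tfrac{n}{P} + P\right) + O\!\left(\tfrac{m+n}{P}\right) \;=\; O\!\left(\tfrac{m+n}{P} + P\right)
\]
with high probability, which is the stated bound. The only subtle point, and the one I would be most careful about, is justifying that the ``computational cost'' $c(V_i)$ analyzed in Theorem~\ref{thm:comp_load} is an honest upper bound on the wall-clock time of \Call{Create--Edges}{} — in particular, that the number of iterations of the while-loop for a given source $u$ is within a constant factor of the number of edges actually output from $u$ (rather than inflated by iterations that test a skipped candidate and reject it). Since each while-loop iteration either emits an edge or terminates at $v \geq n$, charging a constant to each emitted edge plus one terminal iteration per source gives $O(|V_i| + X)$ work, closing the link between the runtime and the cost bound of Theorem~\ref{thm:comp_load}.
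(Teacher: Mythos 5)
Your proposal is correct and follows essentially the same three-phase decomposition as the paper's proof: parallel sum for $S$ in $O(\frac{n}{P}+\log P)$, partitioning via Theorem~\ref{thm:ucptime} in $O(\frac{n}{P}+P)$, and edge generation bounded by Theorem~\ref{thm:comp_load} w.h.p. Your extra care in linking the while-loop iteration count of \Call{Create--Edges}{} to the number of emitted edges is a welcome detail the paper leaves implicit, but it does not change the argument.
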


\begin{proof}
Computing the sum $S$  in parallel takes $O\left(\frac{n}{P}+\log{P}\right)$  time. Using the UCP  scheme, node partitioning takes $O\left(\frac{n}{P} + P\right)$ time (Theorem~\ref{thm:ucptime}). In the UCP  scheme, each partition has $O\left(\frac{m+n}{P}\right)$  computation cost w.h.p. (Theorem~\ref{thm:comp_load}). Thus creating edges using procedure \Call{Create--Edges}{} requires $O\left(\frac{m+n}{P}\right)$  time, and the total time is $O\left(\frac{n}{P} + P + \frac{m+n}{P} \right)=O\left(\frac{m+n}{P} +P\right)$ w.h.p.
\qed
\end{proof}

\subsection{Round-Robin Partitioning (RRP)}
\label{Section:PCL:RRP}
In RRP scheme nodes are distributed in a round robin fashion.  Partition $V_i$ has the nodes $\langle i,i+P,i+2P,\ldots,i+kP \rangle $ such that $i + kP \leq n < i + (k+1)P$; i.e., $V_i=\{j|j \mod P =i\}$. In other words node $i$ is assigned to $V_{i \mod P}$. The number of nodes in each partition is almost equal, either $\lfloor \frac{n}{P} \rfloor$ or $\lceil \frac{n}{P} \rceil$. 

In order to compare the computational cost, consider two partitions $V_{i}$ and $V_{j}$ with $i < j$. Now, for the $x$-th nodes in these two partitions,  we have:  $c_{i+(x-1)P} \geq c_{j+(x-1)P}$  as $i+(x-1)P< j+(x-1)P$ (see Lemma~\ref{lemma:eugeqev}). Therefore, $c(V_{i})=\sum_{u\in V_{i}}c_{u} \geq c(V_{j})=\sum_{u\in V_{j}}c_{u}$ and by the definition of RRP scheme, $|V_{i}| \geq |V_{j}|.$ The difference in cost between any two partitions is at most $w_0$, the maximum weight (see Lemma~\ref{lemma:rrp} in Appendix~\ref{Section:Appendix}). Thus RRP scheme provides quite good load balancing. However, it is not as good as the UCP scheme. It is easy to see that in the RRP scheme, for any two partitions $V_{i}$ and $V_{j}$ such that $i<j$, we have $c(V_{i}) > c(V_{j})$. But, by design, the UCP scheme makes the partition such that cost are equally distributed among the processors. 
Furthermore, although the RRP scheme is simple to implement and provides quite good load balancing, it has another subtle problem. In this scheme, the nodes of a partition are not consecutive and are scattered in the entire range leading to some serious efficiency issues in accessing these nodes. One major issue is that the locality of reference is not maintained leading to a very high rate of cache miss during the execution of the algorithm. This contrast of performance between UCP and RRP is even more prominent when the goal is to generate massive networks as shown by experimental results in Section~\ref{Section:PCL:exp}.

\section{Experimental Results}
\label{Section:PCL:exp}
In this section, we experimentally show the accuracy and performance of our algorithm. The accuracy of our parallel algorithms is demonstrated by showing that the generated degree distributions closely match the input degree distribution. The strong scaling of our algorithm shows that it scales very well to a large number of processors. We also present experimental results showing the impact of the partitioning schemes on load balancing and performance of the algorithm.

\begin{table*}[t]
\caption{Networks used in the experiments}
\label{Table:Networks}
\centering
\begin{tabular}{|l|l|l|l|}
\hline
Network  & Type & Nodes & Edges \\ 
\hline
PL & Power Law Network &  1\textbf{B} & 249\textbf{B}\\
ER &  Erd\H{o}s--R\'enyi Network &  1\textbf{M} & 200\textbf{M}\\
Miami \cite{Barrett2009} &  Contact Network &  2.1\textbf{M} & 51.4\textbf{B}\\
Twitter \cite{Yang2011} &  Real--World Social Network &  41.65\textbf{M} & 1.37\textbf{B}\\
Friendster \cite{Yang2012} &  Real--World Social Network &  65.61\textbf{M} & 1.81\textbf{B}\\
\hline
\end{tabular}
\end{table*}

\begin{figure*}[t]
\centering
{\includegraphics[width=\textwidth]{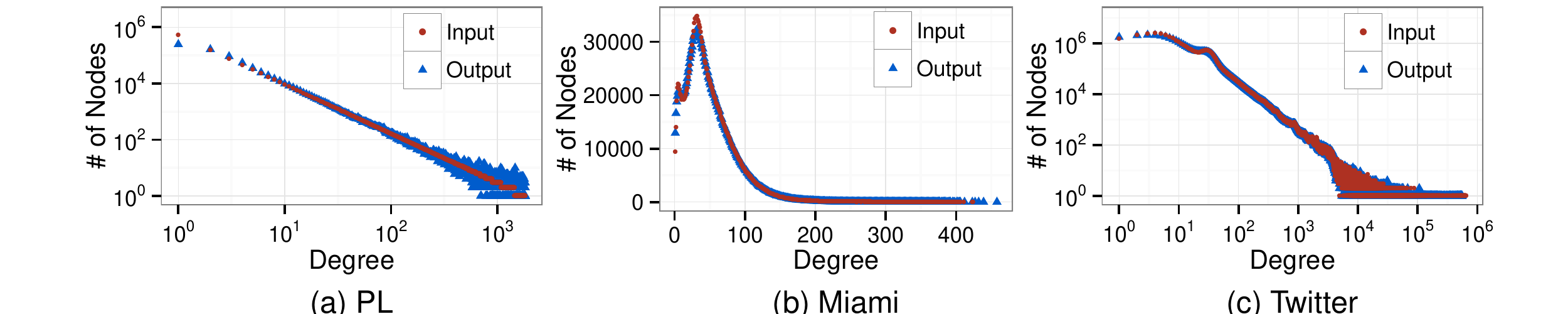}}
\caption{Degree distributions of input and generated degree sequences}
\label{Figure:PCL:DegreeDistributions}
\end{figure*}

\begin{figure*}[t]
\centering
{\includegraphics[width=\textwidth]{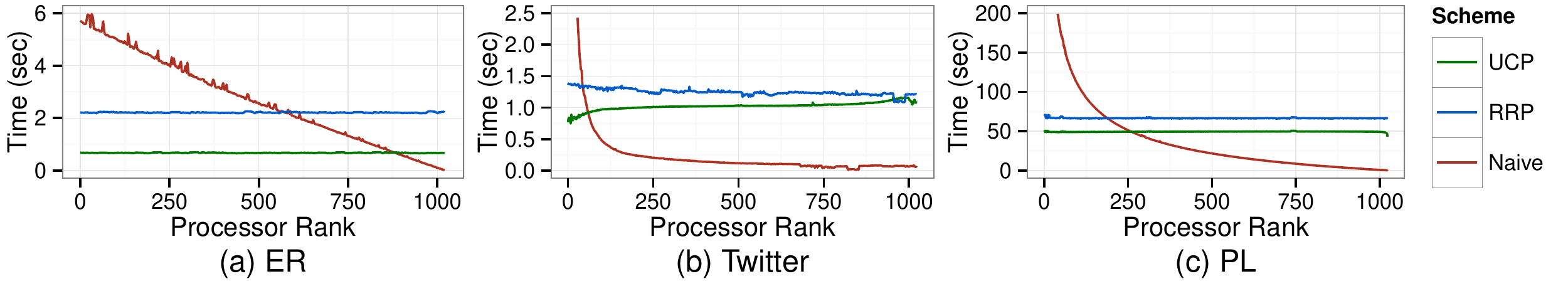}}
\caption{Comparison of partitioning schemes}
\label{Figure:PCL:PartitioningSchemes}
\end{figure*}

\begin{figure*}[t]
\centering
{\includegraphics[width=\textwidth]{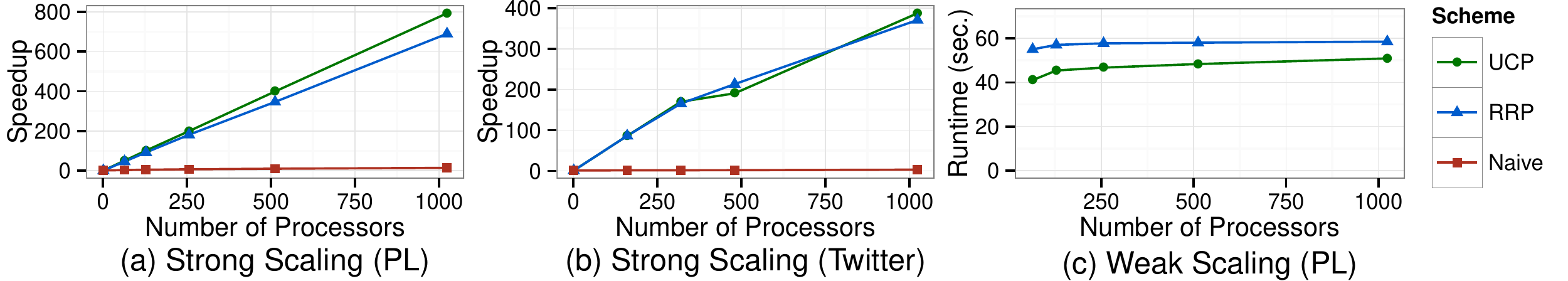}}
\caption{Strong and weak scaling of the parallel algorithms}
\label{Figure:PCL:cl-strongscaling}
\end{figure*}

\textbf{Experimental Setup.}
We used a $81$-node  HPC cluster for the experiments. Each node is powered by two octa-core  SandyBridge E5-2670 2.60GHz (3.3GHz Turbo) processors with $64$ GB memory. The algorithm is developed with MPICH2 (v1.7), optimized for QLogic InfiniBand cards. In the experiments, degree distributions of real-world and artificial random networks were considered. The list of networks is shown in Table~\ref{Table:Networks}. The runtime does not include the I/O time to write the graph into the disk.

\textbf{Degree Distribution of Generated Networks.}
\Figure~\ref{Figure:PCL:DegreeDistributions} shows the input and generated  degree distributions for PL, Miami, and Twitter networks (see Appendix~\ref{Section:OtherNetworks} for other networks).  As observed from the plots, the generated degree distributions closely follow the input degree distributions reassuring that our parallel algorithms generate random networks with given expected degree sequences accurately.

\textbf{Effect of Partitioning Schemes.} As discussed in Section~\ref{Section:PCL:consecutive-partitioning}, partitioning significantly affects load balancing and performance of the algorithm. We demonstrate the effects of  the partitioning schemes in terms of computing time in each processor as shown in  \Figure~\ref{Figure:PCL:PartitioningSchemes} using ER, Twitter, and PL networks.
Computational time fo na\"{\i}ve scheme  is skewed. For all the networks, the computational times for UCP and RRP stay almost constant in all processors, indicating good load-balancing.
RRP is little slower than UCP because the locality of references is not maintained in RRP, leading to high cache miss as discussed in Section~\ref{Section:PCL:RRP}.

\textbf{Strong and Weak Scaling.}
Strong scaling of a parallel algorithm shows it's performance with the increasing number of processors while keeping the problem size fixed. \Figure~\ref{Figure:PCL:cl-strongscaling} shows the speedup of na\"{\i}ve, UCP, and RRP partitioning schemes using PL  and Twitter networks. Speedups are measured as $\frac{T_s}{T_p}$, where $T_s$ and $T_p$ are the running time of the sequential and the parallel algorithm, respectively. The number of processors were varied from $1$ to $1024$. As \Figure~\ref{Figure:PCL:cl-strongscaling} shows, UCP and RRP achieve excellent linear speedups. Na\"{\i}ve scheme performs the worst as expected. The speedup of PL is greater than that of Twitter network. As Twitter is smaller than the PL network, the impact of the parallel communication overheads is higher contributing to decreased speedup. Still the algorithm to generate Twitter network has a speedup of $400$ using $1024$ processors. 

The weak scaling measures the performance of a parallel algorithm when the input size per processor remains constant. For this experiment, we varied the number of processors from $16$ to $1024$. For $P$ processors, a PL network with $10^6P$ nodes and $10^8P$ edges is generated. Note that weak scaling can only be performed on artificial networks. \Figure~\ref{Figure:PCL:cl-strongscaling}(c) shows the weak scaling for UCP and RRP schemes using PL networks. Both RRP and UCP show very good weak scaling with almost constant runtime.

\textbf{Generating Large Networks.}
The primary objective of the parallel algorithm is to generate massive random networks. Using the algorithm with UCP scheme,  we have generated power law networks with one billion nodes and $249$ billion edges in one minute using $1024$ processors with a speedup of about $800$.

\section{Conclusion}
\label{Section:PCL:conclusion}
We have developed an efficient parallel algorithm for generating massive networks with a given degree sequence using the Chung--Lu model. The main challenge in developing this algorithm is load balancing. To overcome this challenge, we have developed a novel parallel algorithm for balancing computational loads that results in a significant improvement in efficiency. We believe that the presented parallel algorithm for the Chung--Lu model will prove useful for modeling and analyzing emerging massive complex systems and uncovering patterns that emerges only in massive networks. As the algorithm can generate networks from any given degree sequence, its application will encompass a wide range of complex systems.

\bibliographystyle{IEEEtranN}
\bibliography{references}

 \section{Appendix}
 \label{Section:Appendix}
 
 \begin{figure*}[ht!]
 \centering
 {\includegraphics[width=\textwidth]{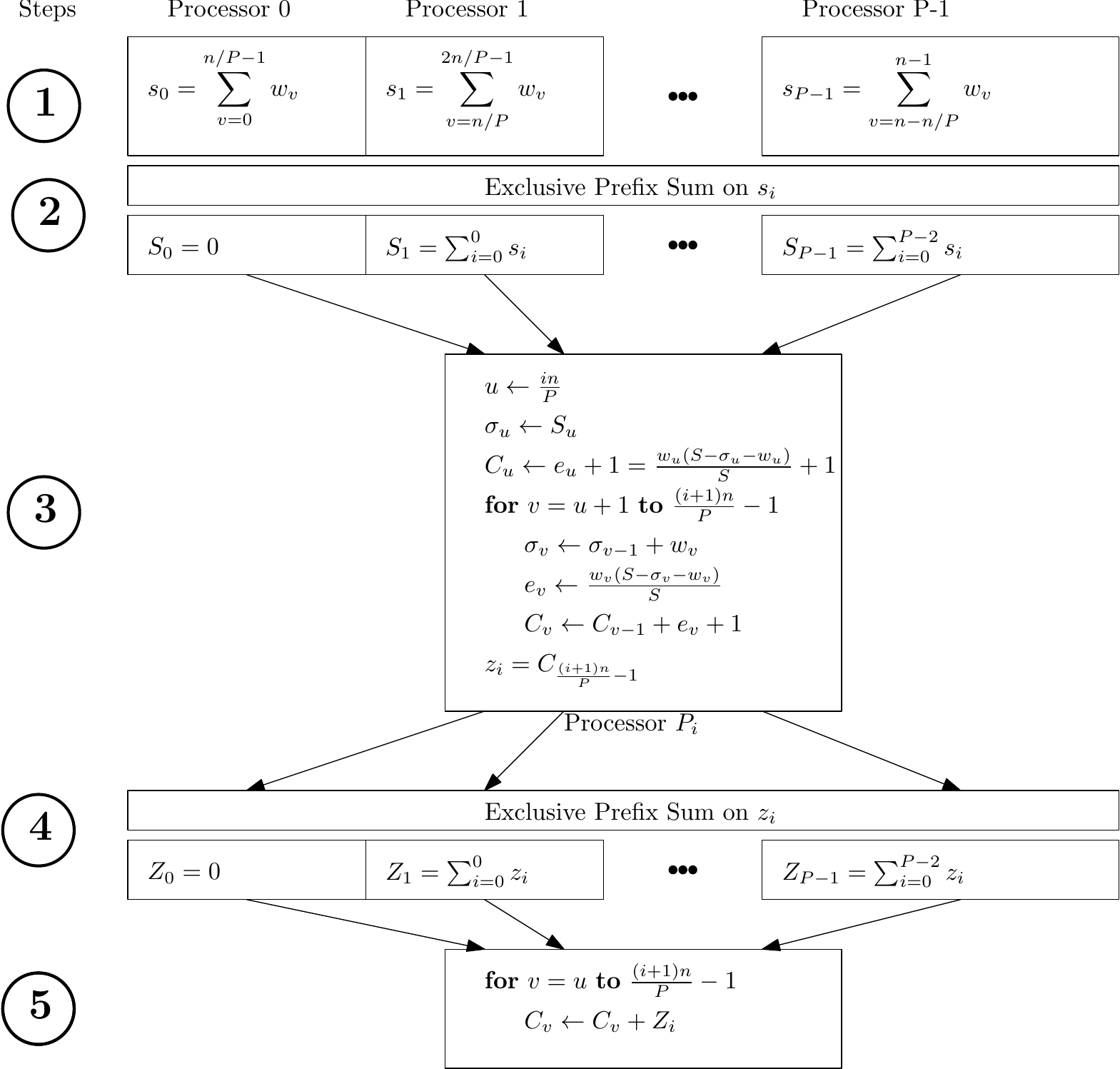}}
 \caption{Steps for determining cumulative cost in UCP}
 \label{Figure:PCL:uep-scheme}
 \end{figure*}
 
 \begin{figure*}[ht!]
 \centering
 {\includegraphics[width=\textwidth]{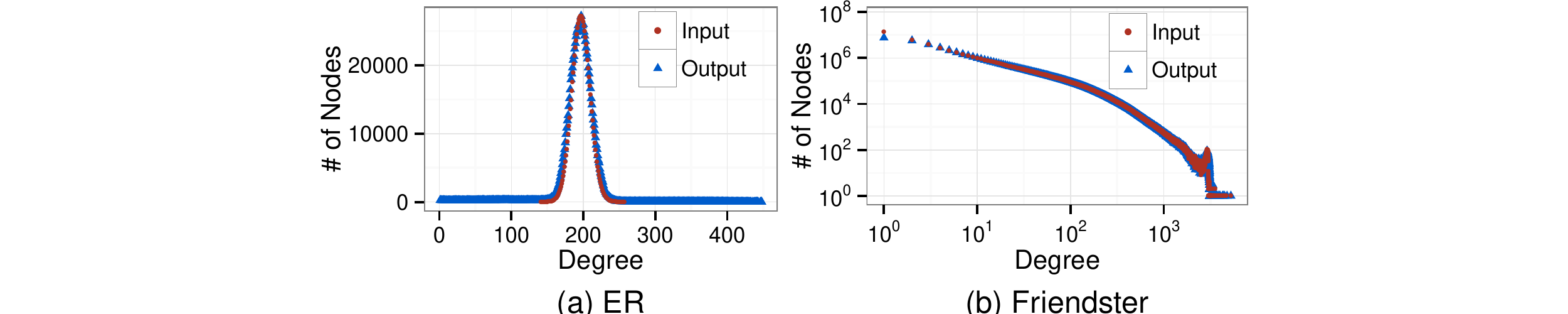}}
 \caption{Input and generated degree distributions for other networks}
 \label{Figure:PCL:OtherNetworks}
 \end{figure*}
 
\begin{lemma}
\label{lemma:eugeqev}
For any two nodes $u,v \in V$ such that $u<v$, $c_{u} \geq c_{v}$.
\end{lemma}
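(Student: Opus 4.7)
The plan is to unfold the definitions in Equations~\ref{Equation:PCL:eu} and \ref{Equation:PCL:cu}, reducing the claim $c_u \geq c_v$ to the comparison $e_u \geq e_v$, since $c_u = e_u + 1$ is just a constant shift of $e_u$. Thus it suffices to show that for $u < v$,
\[
    \frac{w_u}{S}\sum_{k=u+1}^{n-1} w_k \;\geq\; \frac{w_v}{S}\sum_{k=v+1}^{n-1} w_k.
\]

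To establish this, I would invoke two monotonicity facts, each of which is immediate from the problem setup. First, the input weight sequence is assumed sorted in non-increasing order (as stated in Section~\ref{Section:PCL:chung-lu-model}), so $u < v$ implies $w_u \geq w_v \geq 0$. Second, the tail sums satisfy $\sum_{k=u+1}^{n-1} w_k \geq \sum_{k=v+1}^{n-1} w_k$, because the left-hand sum contains all terms of the right-hand sum together with the non-negative extra terms $w_{u+1}, w_{u+2}, \ldots, w_v$.

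Since both pairs of quantities being compared are non-negative, multiplying the two inequalities termwise preserves the direction, and dividing through by the positive constant $S$ yields $e_u \geq e_v$. Adding $1$ to both sides gives $c_u \geq c_v$, as required. The argument is essentially one line once the sorting hypothesis and non-negativity of the weights are in hand, so I do not anticipate any genuine obstacle; the only point worth being explicit about is that both factors in the product $\frac{w_u}{S}\cdot\sum_{k=u+1}^{n-1} w_k$ are non-negative, which is what legitimizes combining the two monotonicities into a single inequality.
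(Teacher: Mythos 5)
Your proof is correct and is precisely the argument the paper gestures at (the paper omits the proof, stating only that the lemma follows from the definition of $c_u$ and the non-increasing ordering of the weights). You have simply filled in the details: both the leading factor $w_u/S$ and the tail sum are non-increasing in the node index and non-negative, so their product $e_u$ is non-increasing, and adding $1$ preserves the inequality.
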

\begin{proof}
Proof omitted. The lemma follows immediately from Equation~\ref{Equation:PCL:cu} and the fact that, the weights are sorted in non-increasing order. 
\qed
\end{proof}

\begin{lemma}
\label{lemma:unp}
Let $c(V_i)$ be the computational cost for partition $V_i$. In the na\"{\i}ve partitioning scheme, we have $c(V_i) - c(V_{i+1}) \ge \frac{n^2}{SP^2} \overbar{W}_{i}\overbar{W}_{i+1}$, where $\overbar{W}_{i} = \frac{1}{|V_i|}\sum_{u\in V_i}{w_u}$, the average weight of the nodes in $V_i$.
\end{lemma}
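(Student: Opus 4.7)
The plan is to observe that since $|V_i|=|V_{i+1}|=n/P$ under the naive scheme, the cost difference reduces to the edge-count difference: $c(V_i)-c(V_{i+1}) = m_i-m_{i+1}$. I would then pair up the $x$-th node of $V_i$, namely $u=in/P+x$, with the $x$-th node of $V_{i+1}$, namely $u'=u+n/P$, and compute the per-pair difference $e_u-e_{u'}$ using the explicit formula $e_u=\tfrac{w_u}{S}\sum_{v>u}w_v$. Writing $T_{u+1}=T_{u'+1}+D_u$ where $D_u=\sum_{v=u+1}^{u'}w_v$, one gets
\[
 e_u-e_{u'} \;=\; \frac{(w_u-w_{u'})\,T_{u'+1} + w_u D_u}{S}.
\]
Both terms are non-negative because the weights are sorted in non-increasing order, so I can drop the first term and retain only $\tfrac{w_u D_u}{S}$ as a lower bound.

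Next, since $D_u$ is a sum of $n/P$ consecutive weights starting at index $u+1$, monotonicity gives $D_u\ge (n/P)\,w_{u+n/P}$. Summing over $u\in V_i$ and using that $u+n/P$ ranges over exactly the nodes of $V_{i+1}$ as $x=0,\ldots,n/P-1$, I obtain
\[
 m_i-m_{i+1} \;\ge\; \frac{n}{SP}\sum_{x=0}^{n/P-1} w_{in/P+x}\,w_{(i+1)n/P+x}.
\]
The remaining step is to turn the sum of products into the product of the averages $\overline{W}_i\,\overline{W}_{i+1}$, which is where the main obstacle lies.

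The key device is Chebyshev's sum inequality: the two finite sequences $(w_{in/P+x})_x$ and $(w_{(i+1)n/P+x})_x$ are both non-increasing in $x$ (they are subsequences of the globally sorted weight vector), hence they are similarly ordered. Chebyshev's inequality then yields
\[
 \sum_{x=0}^{n/P-1} w_{in/P+x}\,w_{(i+1)n/P+x} \;\ge\; \frac{P}{n}\left(\sum_{x} w_{in/P+x}\right)\!\left(\sum_{x} w_{(i+1)n/P+x}\right) \;=\; \frac{n}{P}\,\overline{W}_i\,\overline{W}_{i+1},
\]
using $\sum_{u\in V_j} w_u = (n/P)\,\overline{W}_j$. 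Combining this with the previous bound gives $m_i-m_{i+1}\ge \tfrac{n^2}{SP^2}\overline{W}_i\overline{W}_{i+1}$, and hence the claimed inequality on $c(V_i)-c(V_{i+1})$. The main difficulty of the argument is recognising that Chebyshev's sum inequality is exactly the correlation statement needed to convert the diagonal-style pairing produced by the index shift $u\mapsto u+n/P$ into a clean product of partition averages; everything else is algebra and monotonicity of $w$.
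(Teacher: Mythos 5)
Your proof is correct, and its first half coincides with the paper's own argument: the same pairing $u \mapsto u' = u + n/P$, the same algebraic split of $e_u - e_{u'}$ into a term proportional to $(w_u - w_{u'})$ times a tail sum plus the term $\frac{w_u D_u}{S}$ with $D_u = \sum_{v=u+1}^{u+n/P} w_v$, and the same discarding of the first (non-negative) piece. Where you diverge is in converting $\sum_{u\in V_i}\frac{w_u}{S}D_u$ into $\frac{n^2}{SP^2}\overbar{W}_i\overbar{W}_{i+1}$. The paper notes that the shifted window $\{u+1,\dots,u+n/P\}$ dominates $V_{i+1}$ index-by-index (its $k$-th element has index at most that of the $k$-th element of $V_{i+1}$), so monotonicity of $w$ gives $D_u \ge \sum_{v\in V_{i+1}} w_v = \frac{n}{P}\overbar{W}_{i+1}$ \emph{uniformly in} $u$; this constant factors out and the remaining sum $\sum_{u\in V_i} w_u = \frac{n}{P}\overbar{W}_i$ finishes the proof with no extra machinery. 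You instead use the per-window bound $D_u \ge \frac{n}{P}\,w_{u+n/P}$ (length times minimum element), which leaves the diagonal sum $\sum_{x} w_{in/P+x}\,w_{(i+1)n/P+x}$ and forces you to invoke Chebyshev's sum inequality to recover the product of averages. Your invocation is correctly justified --- both subsequences are non-increasing in $x$ since the global weight vector is sorted, so they are similarly ordered and the inequality points the right way --- and neither intermediate bound dominates the other term by term, so both routes are legitimate. The trade-off is that the paper's choice of window bound factors immediately and keeps the argument entirely elementary, whereas yours needs one additional classical inequality; on the other hand, your Chebyshev step is a reusable observation whenever such an index-shift pairing produces a sum of products of two monotone sequences.
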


\begin{proof}
In the na\"{\i}ve partitioning scheme, each of the partitions has $x = \frac{n}{P}$ nodes, except the last partition which can have smaller than $x$ nodes. For the ease of discussion, assume that for $u\ge n$, $w_u = 0$ and consequently $e_u = 0$. Now, $V_{i}=\left\{ ix, ix+1, \ldots, (i+1)x-1\right\}$.  Using Equation \ref{Equation:PCL:pcost}, we have

\begin{align*}
c(V_i)&- c(V_{i+1}) = \sum_{u \in V_{i}} (e_{u}+1) - \sum_{u \in V_{i+1}} (e_{u}+1) \\
\hspace{-1.5em}
\ge & \sum_{u=ix}^{(i+1)x - 1} (e_{u}+1) - \sum_{u=(i+1)x}^{(i+2)x - 1} (e_{u}+1) \\
= & \sum_{u=ix}^{(i+1)x - 1} (e_{u} - e_{u+x}) \\
= & \sum_{u=ix}^{(i+1)x - 1} \left( \frac{w_u}{S}\sum\limits_{v=u+1}^{n-1}w_{v} - \frac{w_{u+x}}{S}\sum\limits_{v=u+x+1}^{n-1}w_{v} \right)\\
\ge & \sum_{u=ix}^{(i+1)x - 1} \frac{w_u}{S}\sum\limits_{v=u+1}^{u+x}w_{v}
\ge  \sum_{u=ix}^{(i+1)x - 1} \frac{w_u}{S}x \overbar{W}_{i+1}\\
=&  \frac{x\overbar{W}_{i+1}}{S} \cdot x\overbar{W}_i 
=  \frac{n^2}{SP^2} \overbar{W}_{i}\overbar{W}_{i+1}
\end{align*}
\qed
\end{proof}

\begin{lemma}
\label{lemma:rrp}
In Round Robin Partitioning (RRP) scheme, for any $i<j$, we have $c(V_{i}) - c(V_{j}) \le w_{i}$.
\end{lemma}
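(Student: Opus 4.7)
The plan is to exploit the regular arithmetic-progression structure of RRP. Partition $V_i$ consists of the nodes $\{i, i+P, i+2P, \ldots\}$ and partition $V_j$ of $\{j, j+P, j+2P, \ldots\}$. The crucial structural observation, which drives the entire argument, is that since $0 \le i < j \le P-1$, the $k$-th node of $V_j$ lies strictly between the $k$-th and $(k+1)$-th nodes of $V_i$; that is, $j+kP < i+(k+1)P$ for every $k$.

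Next, I would pair the $k$-th entries of $V_i$ and $V_j$ and write $c_u = e_u + 1$. In the generic case where $|V_i| = |V_j| = K$, the unit contributions cancel pairwise, leaving
\[
c(V_i) - c(V_j) = \sum_{k=0}^{K-1}\bigl(e_{i+kP} - e_{j+kP}\bigr).
\]
Since $e_u$ is non-increasing in $u$ (an immediate corollary of Lemma~\ref{lemma:eugeqev}) and $j+kP < i+(k+1)P$, we have $e_{j+kP} \ge e_{i+(k+1)P}$. Substituting this into every summand upper-bounds the sum by a telescoping one that collapses to $e_i - e_{i+KP}$. Because $e_{i+KP} \ge 0$ and $e_i = \tfrac{w_i}{S}\sum_{v=i+1}^{n-1} w_v \le w_i$ (the inner sum being at most $S$), we conclude $c(V_i) - c(V_j) \le w_i$, as claimed.

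The only subtle point is the boundary case $|V_i| = |V_j| + 1$, which can arise when $P \nmid n$. There, $V_i$ carries one extra trailing element whose $e$-contribution is absorbed cleanly by the same telescoping; the residual unit term is negligible in the regime of interest where $w_i \ge 1$. The main obstacle I anticipate is simply arriving at the pairing inequality $e_{j+kP} \ge e_{i+(k+1)P}$: once that is in hand, the telescoping makes everything collapse. This inequality critically relies on $j-i < P$, so the argument does not extend to partition schemes that fail to interleave nodes as tightly as RRP does.
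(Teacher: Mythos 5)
Your proof is correct and follows essentially the same route as the paper's: pair the $x$-th elements of $V_i$ and $V_j$, use the interleaving $j+xP < i+(x+1)P$ together with the monotonicity of the cost (Lemma~\ref{lemma:eugeqev}) to telescope, and finish with $e_i = \frac{w_i}{S}\sum_{v>i} w_v < w_i$. The only difference is cosmetic --- you telescope on $e_u$ after cancelling the unit terms while the paper telescopes on $c_u$ directly --- and you additionally flag the $|V_i|=|V_j|+1$ boundary case, which the paper silently ignores.
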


\begin{proof}
The difference in  cost between two partitions $V_{i}$ and $V_{j}$ is given by:
\begin{align*}
c(V_{i}) - c(V_{j})  &= \sum_{u \in V_{i}}c_{u} - \sum_{u \in V_{j}}c_{u} =   \sum_{x=0}^{k}  \left(  c_{i+xP}-c_{j+xP} \right)\\
 &= c_{i}- \sum_{x=0}^{k-1} \left( c_{j+xP} - c_{i+(x+1)P} \right)  - c_{j+kP}\\
&\leq c_{i} - c_{j+kP}\text{\phantom{ABCDEF} {$\left[ c_{j+xp} \geq c_{i+(x+1)P} \right]$}}\\
&\leq e_{i} =\frac{w_i}{S}\sum_{v=i+1}^{n-1}w_v < \frac{w_i}{S}S =w_i
\end{align*}
\qed
\end{proof}

\subsection{Visual Representation of Computing Cost in UCP}
\label{Figure:Schematic Diagram}

\Figure~\ref{Figure:PCL:uep-scheme} shows the visual representation of \Call{Calc-Cost}{} procedure of Algorithm~\ref{Algorithm:PCL:UCP}.

\subsection{Other Networks}
\label{Section:OtherNetworks}

\Figure~\ref{Figure:PCL:OtherNetworks} shows input and generated degree distributions for ER and Friendster networks as shown in Table~\ref{Table:Networks}.


\end{document}